\theoremstyle{definition}
\newtheorem{proposition}{Proposition}
\begin{document}

\title{Joint Transmit and Reflective Beamforming for IRS-Assisted Integrated Sensing and Communication}

\author{
\IEEEauthorblockN{Xianxin~Song\IEEEauthorrefmark{1},  Ding~Zhao\IEEEauthorrefmark{2}, Haocheng~Hua\IEEEauthorrefmark{1}, Tony~Xiao~Han\IEEEauthorrefmark{3}, Xun~Yang\IEEEauthorrefmark{3}, and~Jie~Xu\IEEEauthorrefmark{1}}

\IEEEauthorblockA{\IEEEauthorrefmark{1}SSE and FNii, 
The Chinese University of Hong Kong, Shenzhen, China}
\IEEEauthorblockA{\IEEEauthorrefmark{2}The College of Information Science and Electronic Engineering, Zhejiang University,  China}
\IEEEauthorblockA{\IEEEauthorrefmark{3}Wireless Technology Lab, 2012 Laboratories, Huawei, China}
Email: xianxinsong@link.cuhk.edu.cn, 3180103736@zju.edu.cn, haochenghua@link.cuhk.edu.cn,\\
 \{tony.hanxiao, david.yangxun\}@huawei.com, xujie@cuhk.edu.cn
}

\maketitle

\begin{abstract}
This paper studies an intelligent reflecting surface (IRS)-assisted integrated sensing and communication (ISAC) system, in which one IRS with a uniform linear array (ULA) is deployed to not only assist the wireless communication from a multi-antenna base station (BS) to a single-antenna communication user (CU), but also create virtual line-of-sight (LoS) links for sensing potential targets at areas with LoS links blocked. We consider that the BS transmits combined information and sensing signals for ISAC. Under this setup, we jointly optimize the transmit information and sensing beamforming at the BS and the reflective beamforming at the IRS, to maximize the IRS’s minimum beampattern gain towards the desired sensing angles, subject to the minimum signal-to-noise ratio (SNR) requirement at the CU and the maximum transmit power constraint at the BS. Although the formulated SNR-constrained beampattern gain maximization problem is non-convex and difficult to solve, we present an efficient algorithm to obtain a high-quality solution by using the techniques of alternating optimization and semi-definite relaxation (SDR). Numerical results show that the proposed joint beamforming design achieves improved sensing performance while ensuring the communication requirement as compared to benchmarks without such joint optimization. It is also shown that the use of dedicated sensing beams is beneficial in enhancing the performance for IRS-assisted ISAC.
\end{abstract}

\begin{IEEEkeywords}
Integrated sensing and communication (ISAC), intelligent reflecting surface (IRS), joint transmit and reflective beamforming.
\end{IEEEkeywords}

\IEEEpeerreviewmaketitle

\section{Introduction}
Integrated sensing and communication (ISAC) has been recognized as one of the candidate key technologies for beyond-fifth-generation (B5G) and sixth-generation (6G)  wireless networks to enable environment-aware applications such as auto-driving, industrial automation, and mixed reality, in which the wireless infrastructures and spectrum resources are reused for radar sensing, localization, and imaging (see, e.g., \cite{8999605,9468975,liu2021integrated,9606831} and the references therein). In order to efficiently provide both  sensing and communication services, various designs on ISAC system architectures, ISAC waveform, and transmit beamforming have  been proposed in prior works (see, e.g., \cite{9246715,9415119,hua,ren2021optimal}). 

Despite the recent research progress, ISAC networks face new technical challenges, because the communication and sensing systems deal with the multipath wireless channels in different ways. In wireless communications, both line-of-sight (LoS) and non-LoS (NLoS) links in multipath channels can be exploited beneficially  to enhance the communication rate by providing more degrees of freedom. By contrast, in radar sensing, only LoS links are utilized for information extraction, by treating the NLoS links as harmful interference or clutters. As a result, how to provide ubiquitous sensing coverage for areas with LoS links blocked remains a challenge, especially in the scenario with dense obstacles such as buildings and trees. 

To resolve this issue, the intelligent reflecting surface (IRS) or reconfigurable intelligent surface (RIS) has emerged as a viable solution \cite{9140329,8811733,aubry2021reconfigurable,9361184,Stefano,9264225,9416177,9364358}. By properly adjusting the phase shifts of digitally-controlled reflecting elements, the IRS can help reconfigure the wireless propagation environment to create virtual LoS links for sensing targets without LoS connections. While there is rich literature on the IRS-assisted wireless communications (see, e.g., \cite{9140329,8811733} and the references therein), only several recent works investigated the IRS-assisted wireless sensing  \cite{aubry2021reconfigurable,Stefano,9361184}, radar-communication coexistence \cite{9264225}, and ISAC \cite{9416177,9364358}. For instance, the authors in \cite{9416177} studied an IRS-assisted ISAC system with one base station (BS) and multiple communication users (CUs), in which only the communication was assisted by the IRS while the sensing was based on the direct LoS links. Furthermore, the authors in \cite{9364358} considered a simplified ISAC setup with one BS, one CU, and one target, in which the signal-to-noise ratio (SNR) of radar is maximized while ensuring the SNR at the CU. To the best of our knowledge, how to utilize IRS to enable the NLoS {\it multi-target} sensing and assist the communication at the same time has not been investigated in the literature yet.

This paper considers an IRS-assisted ISAC system consisting of one multi-antenna BS, one IRS with a uniform linear array (ULA), one single-antenna CU, and multiple potential targets at the NLoS areas of the BS. In particular, we consider that the BS sends one information beam combined with multiple dedicated sensing beams to facilitate ISAC. Because the direct LoS links from the BS to the potential targets are not available, the BS can only sense these targets via the virtual LoS link reflected by the IRS. By contrast, the BS can exploit both its direct and reflected LoS/NLoS links with the CU for efficient communication. 

In particular, our objective is to maximize the IRS's minimum beampattern gain towards the desired sensing angles, while ensuring the minimum SNR requirement at the CU and the maximum transmit power constraint at the BS. Although the formulated SNR-constrained beampattern gain maximization problem is non-convex and difficult to solve in general, we present an efficient algorithm to obtain a high-quality solution by using the techniques of alternating optimization and semi-definite relaxation (SDR). Numerical results show that the proposed joint beamforming design achieves improved sensing performance while ensuring the communication requirement, as compared to heuristically designed benchmarks. It is also shown that our proposed design with dedicated sensing beams significantly outperforms the benchmark scheme without using sensing beams (by only reusing information beams for sensing), which reveals the importance of dedicated sensing beams in IRS-assisted ISAC systems. 

{\it Notations}: Boldface letters refer to vectors (lower case) or matrices (upper case). For a square matrix $\bm S$, $\mathrm {tr}(\bm S)$ and $\bm S^{-1}$ denote its trace and inverse, respectively, while $\bm S \succeq \mathbf{0}$ means that $\bm S$ is positive semidefinite. For an arbitrary size matrix $\bm M$, $\mathrm {rank}(\bm M)$, $\bm M^{\mathrm {H}}$, and $\bm M^{\mathrm {T}}$ denote its rank, conjugate transpose, and transpose, respectively. The distribution of a circularly symmetric complex Gaussian (CSCG) random vector with mean vector $\bm x$ and covariance matrix $\bm \Sigma$ is denoted by $\mathcal{C N}(\bm{x}, \mathbf{\Sigma})$ and $\sim$ stands for “distributed as”. $\mathbb{C}^{x \times y}$ denotes the space of $x \times y$ complex matrices.  $\mathbb{E}(\cdot)$ denotes the statistical expectation. $\|\bm x\|$ denotes the Euclidean norm of vector $\bm x$. $\mathrm {diag}(a_1,...,a_N)$ denotes a diagonal matrix with diagonal elements $a_1,...,a_N$. $\mathrm {arg}(\bm x)$ denotes a vector with each element being the phase of the corresponding element in $\bm x$. $[\bm{x}]_{(1:N)}$ denotes the vector that contains the first $N$ elements in $\bm x$. 

\section{System Model and Problem Formulation}

\begin{figure}[t]
    \centering
    \includegraphics[width=0.385\textwidth]{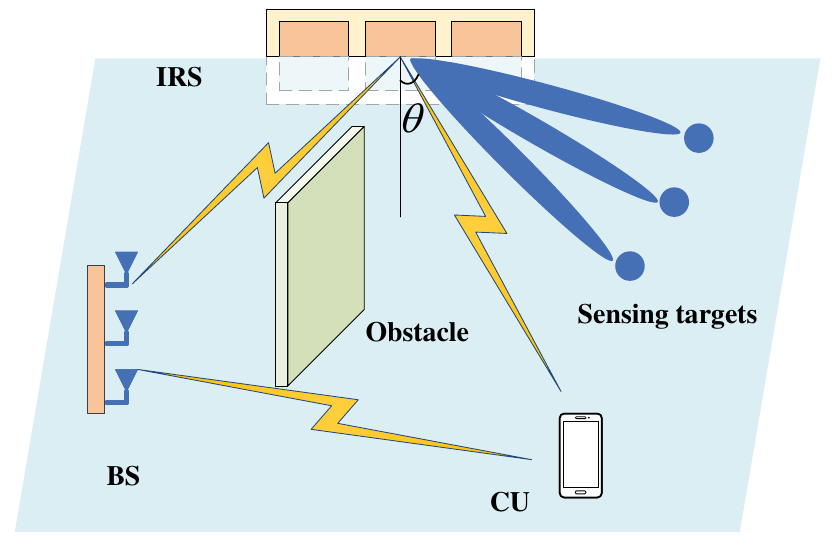}
    \caption{Illustration of the IRS-assisted ISAC system, in which the targets are sensed via the virtual LoS links from the BS to the IRS to the targets.}
    \label{system_model}
\end{figure}

We consider an IRS-assisted ISAC system as shown in Fig.~\ref{system_model}, which consists of one BS with $M>1$ transmit antennas, one CU with one receive antenna, an ULA-IRS with $N>1$ elements, and multiple potential targets at the NLoS areas of the BS. It is assumed that the BS perfectly knows the global channel state information (CSI) and the targets' rough directions for the purpose of initial investigation.

To facilitate ISAC, we consider that the BS uses the transmit beamforming to send both  information and dedicated radar sensing signals. Let $s$ denote the information signal desired by the CU, and $\bm{w} \in \mathbb{C}^{M \times 1}$ denote the corresponding transmit beamforming vector, where $s$ is assumed to be a random variable with zero mean and unit variance. Let $\bm{x}_0 \in \mathbb{C}^{M\times 1}$ denote the dedicated sensing signal, which is a randomly generated sequence independent of $s$, with zero mean and covariance matrix $\bm R_0 \triangleq \mathbb{E}(\bm{x}_0\bm{x}_0^{\mathrm {H}}) \succeq \bm{0}$. Notice that we consider the general multi-beam sensing signal transmission with $0 \le \mathrm {rank}(\bm R_0) \le M$, in order to provide more degrees of freedom for sensing. Here, $\mathrm {rank}(\bm R_0)$ corresponds to the number of sensing beams sent by the BS, each of which can be obtained via the eigenvalue decomposition (EVD) of $\bm R_0$. By combining the information and sensing signals, the transmitted signal $\bm{x}$ by the BS is
\begin{equation}
  \bm{x} = \bm{w} s + \bm{x}_0.
\end{equation}
Let $P_0$ denote the maximum transmit power at the BS. We have the transmit power constraint as 
\begin{equation}\label{equ:sum_power_constr}
  \mathbb{E}(\| \bm x\|^2) =\|\bm{w}\|^2 + \mathrm {tr}(\bm R_0) \le P_0.
\end{equation}

First, we consider the information transmission from the BS to the CU.
Let $\bm{G} \in \mathbb{C}^{N \times M}$, $\bm{h}_{\text{r}} \in \mathbb{C}^{N \times 1}$, and $\bm{h}_{\text{d}} \in \mathbb{C}^{M \times 1}$ denote the channel matrix of the BS-IRS link, and the channel vectors of the IRS-CU and BS-CU links, respectively. Notice that the channel matrix/vectors $\bm G$, $\bm h_{\mathrm {r}}$, and $\bm h_{\mathrm {d}}$ may contain both LoS and NLoS signal paths in the corresponding links. Furthermore, let $\phi_n \in (0, 2\pi]$ denote the phase shift of reflecting element $n \in \{1,2,\cdots,N\}$ at the IRS, and $\bm{\Phi} = \mathrm {diag}(e^{j\phi_1},...,e^{j\phi_N})$  denote the corresponding reflection beamforming matrix. 
By combining the signals transmitted through the direct BS-CU link and the reflected BS-IRS-CU link, the received signal at the CU is
\begin{equation}
    y = (\bm{h}_{\text{r}}^{\mathrm {H}} \bm{\Phi} \bm{G} + \bm{h}_{\text{d}}^{\mathrm {H}}) (\bm{w}s+\bm x_0)+n,
\end{equation}
where $n \sim \mathcal{CN}(0,\sigma^2)$ denotes the additive white Gaussian noise (AWGN) at the CU receiver. 
Notice that at the CU receiver, the reception of information signal $s$ may suffer from the interference caused by sensing signal $\bm x_0$. Nevertheless, as $\bm x_0$ can be generated offline and thus known by the CU prior to the transmission, the CU can pre-cancel the interference from $\bm x_0$ to facilitate the decoding of $s$\cite{hua}. After such processing, the received SNR at the CU is
\begin{equation}\label{eq:SNR}
    \gamma=\frac{|(\bm{h}_{\text{r}}^{\mathrm {H}} \bm{\Phi} \bm{G} + \bm{h}_{\text{d}}^{\mathrm {H}})\bm{w}|^2}{\sigma^2}.
\end{equation}

Next, we consider the radar sensing towards the potential targets at the NLoS areas of the BS. In this case, we use the virtual LoS links created by the IRS's reflection to sense them.
Accordingly, we adopt the IRS's beampattern gain towards the desired sensing angles as the sensing performance metric. Let $d_{\text{IRS}}$ denote the spacing between consecutive reflecting elements at the IRS and $\lambda$ denote the wavelength. The steering vector at the IRS with angle of departure (AoD)  $\theta$ is
\begin{equation} \label{equ:steering}
  \bm{a}(\theta) = [1,e^{j  \frac{2\pi d_{\text{IRS}}}{\lambda} \sin \theta},...,e^{j  \frac{2\pi (N-1) d_{\text{IRS}}}{\lambda}  \sin\theta}]^{\mathrm {T}}.
\end{equation} 
Similarly in \cite{hua,ren2021optimal}, we consider that both the information signal $s$ and the dedicated sensing signal $\bm x_0$ can be jointly used to illuminate the sensing targets. As a result, the beampattern gain from the IRS towards angle $\theta$ is given as
\begin{equation}\label{eq:beampattern_gain}
\begin{split}
  \mathcal{P}(\theta)=&\mathbb{E}(|\bm{a}^{\mathrm {H}}(\theta)\bm{\Phi} \bm{G} (\bm{w} s + \bm{x}_0)|^2)\\
  =&\bm{a}^{\mathrm {H}}(\theta)\bm{\Phi} \bm{G} (\bm{w} \bm{w}^{\mathrm {H}} + \bm R_0 ) \bm{G}^{\mathrm {H}} \bm{\Phi}^{\mathrm {H}} \bm{a}(\theta).
\end{split}
\end{equation}

We are particularly interested in sensing potential targets at $L$ desired angles with respect to the IRS, denoted by $\theta_1, \cdots, \theta_L$. Let $\mathcal L \triangleq \{1,\cdots, L\}$ denote the set of desired sensing angles. Our objective is to maximize the minimum beampattern gain at the $L$ angles, while ensuring the minimum SNR requirement at the CU and the maximum transmit power constraint at the BS. The SNR-constrained minimum beampattern gain maximization problem is formulated as
\begin{subequations}
  \begin{align}\notag
    \text{(P1)}: \max_{\bm{w},\bm R_0,\bm{\Phi}}&\ \ \min_{l \in \mathcal L} \ \bm{a}^{\mathrm {H}}(\theta_l)\bm{\Phi} \bm{G} (\bm{w} \bm{w}^{\mathrm {H}}+ \bm R_0 ) \bm{G}^{\mathrm {H}} \bm{\Phi}^{\mathrm {H}}\bm{a}(\theta_l)\\\label{eq:SNR_cons}
    \text { s.t. }& \quad \frac{|(\bm{h}_{\text{r}}^{\mathrm {H}} \bm{\Phi} \bm{G} + \bm{h}_{\text{d}}^{\mathrm {H}})\bm{w}|^2}{\sigma^2} \geq \Gamma,\\\label{eq:energy_cons}
    &\quad   \|\bm{w}\|^2  + \mathrm {tr}(\bm R_0) \le P_0,\\\label{eq:R_0}
    &\quad   \bm R_0 \succeq 0,\\\label{eq:phi}
    &\quad   \bm{\Phi} = \mathrm {diag}(e^{j\phi_1},...,e^{j\phi_N}),
  \end{align}
\end{subequations}
where $\Gamma$ denotes the minimum SNR threshold at the CU.
Due to the coupling between the transmit beamformers ($\bm w$ and $\bm R_0$) and the reflective beamformer ($\bm \Phi$), problem $\text{(P1)}$ is highly non-convex and thus very difficult to be optimally solved.

\textit {Remark 1:} It is worth noting that for conventional ISAC systems without IRS, beampatten matching is another widely adopted design criteria for sensing (e.g., \cite{hua,ren2021optimal,8550811}), in which the objective is to minimize the mean squared matching error between the achieved transmit beampattern and a pre-determined beampattern, by allowing the BS to transmit at full power. This design, however, may not be applicable for the IRS-assisted ISAC system of our interest. This is because under the beampattern matching design, the BS may choose to steer the energy orthogonal to the IRS and use the direct LoS link for serving the CU, which may lead to minimized (or even zero) matching error but very small (or even zero) beampattern gains at the IRS that are not desired for sensing.

\section{Proposed Joint Beamforming Solution to Problem $\text{(P1)}$}
This section proposes an efficient algorithm to solve problem $\text{(P1)}$ based on the principle of alternating optimization, in which the transmit beamformers ($\bm w$ and $\bm R_0$) at the BS and the reflective beamformer ($\bm \Phi$) at the IRS are optimized in an alternating manner, by treating the other to be given. The details of the proposed joint beamforming algorithm are summarized in Algorithm $1$. In the following, we focus on the transmit and reflective beamforming optimization, respectively.

\begin{algorithm}[t]
\caption{The proposed joint transmit and reflective beamforming  algorithm.}
\begin{algorithmic}[1]
\STATE Initialize the reflective beamforming matrix $\bm{\Phi}^{(1)}$ and set the iteration number $k= 1$.
\REPEAT
\STATE Solve problem (SDR2.1) under given reflective beamformer $\bm{\Phi}^{(k)}$, and reconstruct an equivalent optimal solution $\bm w^{(k)}$ and $\bm R_0^{(k)}$ using Proposition $1$.

\STATE Solve problem (SDR3.1) under given transmit beamformer $\bm w^{(k)}$ and $\bm R_0^{(k)}$, and reconstruct an approximate rank-one solution $\bm{\Phi}^{(k+1)}$ using Gaussian randomization.

\STATE Update $k=k+1$.
\UNTIL{ The fractional increase of the objective value  is below a threshold $\epsilon>0$ or the maximum number of iterations is
reached.}
\end{algorithmic}
\end{algorithm}

\subsection{Transmit Beamforming Optimization at BS}

First, we optimize the transmit beamformers $\bm w$ and $\bm R_0$ in problem $\text{(P1)}$ under any given reflective beamformer $\bm \Phi$. This problem is formulated as
\begin{equation}
  \begin{split}\notag
    \text{(P2)}: \max_{\bm{w},\bm R_0}& \  \min_{l \in \mathcal L} \ \bm{a}^{\mathrm {H}}(\theta_l)\bm{\Phi} \bm{G} (\bm{w} \bm{w}^{\mathrm {H}}\!+ \!\bm R_0 ) \bm{G}^{\mathrm {H}} \bm{\Phi}^{\mathrm {H}}\bm{a}(\theta_l)\\ 
    \text { s.t. }&\quad  \eqref{eq:SNR_cons}\text{,} \ \eqref{eq:energy_cons}\text{,} \ \text{and} \ \eqref{eq:R_0}.
  \end{split}
\end{equation}
Towards this end, we define $\bm{W}=\bm{w}\bm{w}^{\mathrm {H}}$ with $\bm{W} \succeq \bm{0}$ and $\mathrm {rank}(\bm{W}) \le 1$. By letting $\bm h = \bm{G}^{\mathrm {H}} \bm{\Phi}^{\mathrm {H}} \bm{h}_{\text{r}} + \bm{h}_{\text{d}}$ denote the combined channel vector from the BS to the CU and substituting $\bm{W}=\bm{w}\bm{w}^{\mathrm {H}}$, the transmit beamforming optimization in problem $\text{(P2)}$ is reformulated as 
\begin{subequations}
  \begin{align}\notag
    \text{(P2.1)}:\max_{\bm{W},\bm R_0}&\ \ \min_{l \in \mathcal L} \ \bm{a}^{\mathrm {H}}(\theta_l)\bm{\Phi} \bm{G} (\bm{W}+ \bm R_0 ) \bm{G}^{\mathrm {H}} \bm{\Phi}^{\mathrm {H}}\bm{a}(\theta_l)\\  \label{eq:st_1}
    \text { s.t. }& \quad \mathrm {tr}(\bm h \bm h^{\mathrm {H}}\bm{W}) \geq \Gamma \sigma^2,\\ \label{eq:st_2}
    &\quad  \mathrm {tr}(\bm{W}+\bm R_0) \le P_0,\\ \label{eq:st_4}
    &\quad  \bm R_0 \succeq 0, \bm{W} \succeq 0,\\ \label{eq:st_5}
    &\quad \mathrm{rank}(\bm{W}) \le 1.
  \end{align}
\end{subequations}

However, problem $\text{(P2.1)}$ is non-convex due to the rank-one constraint on $\bm{W}$ in \eqref{eq:st_5}. To resolve this issue, we relax the rank-one constraint and accordingly obtain the SDR version of problem $\text{(P2.1)}$ as
\begin{equation}
  \begin{split}\notag
    \text{(SDR2.1)}:\max_{\bm{W},\bm R_0}&\ \ \min_{l \in \mathcal L} \ \bm{a}^{\mathrm {H}}(\theta_l)\bm{\Phi} \bm{G} (\bm{W}+ \bm R_0 ) \bm{G}^{\mathrm {H}} \bm{\Phi}^{\mathrm {H}}\bm{a}(\theta_l)\\  \notag
    \text { s.t. }& \quad \eqref{eq:st_1}\text{,} \ \eqref{eq:st_2}\text{,} \ \text{and} \ \eqref{eq:st_4}.
  \end{split}
\end{equation}
It is observed that problem $\text{(SDR2.1)}$ is a semi-definite program (SDP) that can be solved optimally by convex solvers such as CVX\cite{cvx}. Let $\bm W^*$ and $\bm R_0^*$ denote the obtained optimal solution to problem $\text{(SDR2.1)}$, where $\bm W^*$ is generally of high rank. Based on $\bm W^*$, we can reconstruct an equivalent rank-one solution and accordingly find the optimal solution to problem $\text{(P2.1)}$ (and thus $\text{(P2)}$), as shown in the following proposition.
\begin{proposition} \label{Proposition1}
The optimal solution to problem $\text{(P2.1)}$ is 
\begin{equation}
\hat{\bm W} =\hat{\bm w}\hat{\bm w}^{\mathrm {H}},
\end{equation}
\begin{equation}\label{eq:R_new}
\hat{\bm R_0} = \bm R_0^* +  \bm W^* - \hat{\bm W},
\end{equation}
where $\hat{\bm w} = (\bm h^{\mathrm {H}} \bm W^* \bm h)^{-1/2}\bm W^* \bm h$ denotes the corresponding transmit beamforming vector at the BS. Accordingly, $\hat {\bm w}$ and $\hat {\bm R_0}$ become the optimal solution to problem $\text{(P2)}$.
\end{proposition}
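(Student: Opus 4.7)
The plan is to verify that the candidate $(\hat{\bm W}, \hat{\bm R_0})$ is feasible for problem $\text{(P2.1)}$ and attains the optimal value of the relaxation $\text{(SDR2.1)}$, which upper-bounds $\text{(P2.1)}$. Since $\hat{\bm W} = \hat{\bm w}\hat{\bm w}^{\mathrm H}$ is rank-one by construction, the rank constraint \eqref{eq:st_5} is automatic, so the argument reduces to three checks: (i) objective and total-power invariance; (ii) the SNR constraint \eqref{eq:st_1}; and (iii) positive semidefiniteness of $\hat{\bm R_0}$.

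First I would observe that the definition \eqref{eq:R_new} gives the key identity
\begin{equation*}
\hat{\bm W} + \hat{\bm R_0} \;=\; \bm W^* + \bm R_0^*,
\end{equation*}
so the beampattern objective $\bm a^{\mathrm H}(\theta_l) \bm\Phi \bm G (\bm W + \bm R_0) \bm G^{\mathrm H} \bm\Phi^{\mathrm H} \bm a(\theta_l)$ and the trace budget in \eqref{eq:st_2} are preserved from $(\bm W^*, \bm R_0^*)$ to $(\hat{\bm W}, \hat{\bm R_0})$. For the SNR, I would directly compute
\begin{equation*}
\hat{\bm w}^{\mathrm H} \bm h \;=\; (\bm h^{\mathrm H} \bm W^* \bm h)^{-1/2}\, \bm h^{\mathrm H} \bm W^* \bm h \;=\; (\bm h^{\mathrm H} \bm W^* \bm h)^{1/2},
\end{equation*}
so that $|\hat{\bm w}^{\mathrm H} \bm h|^2 = \bm h^{\mathrm H} \bm W^* \bm h = \mathrm{tr}(\bm h \bm h^{\mathrm H} \bm W^*) \geq \Gamma \sigma^2$, inheriting feasibility of \eqref{eq:st_1} from $\bm W^*$.

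The main obstacle is step (iii): verifying $\hat{\bm R_0} \succeq \bm 0$, since subtracting the rank-one matrix $\hat{\bm W}$ from $\bm W^* + \bm R_0^*$ is not obviously PSD. The plan here is to show the stronger inequality $\bm W^* - \hat{\bm W} \succeq \bm 0$, from which $\hat{\bm R_0} \succeq \bm 0$ follows by adding $\bm R_0^* \succeq \bm 0$. For any $\bm v \in \mathbb C^{M\times 1}$, I would write
\begin{equation*}
\bm v^{\mathrm H} \bigl(\bm W^* - \hat{\bm W}\bigr) \bm v \;=\; \bm v^{\mathrm H} \bm W^* \bm v \;-\; \frac{|\bm v^{\mathrm H} \bm W^* \bm h|^2}{\bm h^{\mathrm H} \bm W^* \bm h},
\end{equation*}
and then substitute $\tilde{\bm v} = (\bm W^*)^{1/2}\bm v$, $\tilde{\bm h} = (\bm W^*)^{1/2}\bm h$ so the right-hand side becomes $\|\tilde{\bm v}\|^2 - |\tilde{\bm v}^{\mathrm H}\tilde{\bm h}|^2/\|\tilde{\bm h}\|^2$, which is nonnegative by the Cauchy--Schwarz inequality. (The degenerate case $\bm h^{\mathrm H} \bm W^* \bm h = 0$ cannot occur because $\bm W^*$ must satisfy \eqref{eq:st_1} with $\Gamma \sigma^2 > 0$.)

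Having established feasibility, I would close the argument by noting that $\text{(SDR2.1)}$ is a relaxation of $\text{(P2.1)}$ (only the rank constraint is dropped), so its optimal value upper-bounds that of $\text{(P2.1)}$. Since $(\hat{\bm W}, \hat{\bm R_0})$ is a rank-one-compliant feasible point of $\text{(P2.1)}$ that attains this upper bound via the objective-invariance identity above, it is optimal for $\text{(P2.1)}$; unwinding $\hat{\bm W} = \hat{\bm w}\hat{\bm w}^{\mathrm H}$ then yields an optimal $(\hat{\bm w}, \hat{\bm R_0})$ for the original problem $\text{(P2)}$, completing the proposition.
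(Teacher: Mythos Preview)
Your proposal is correct and follows essentially the same approach as the paper: both exploit the identity $\hat{\bm W}+\hat{\bm R_0}=\bm W^*+\bm R_0^*$ for objective/power invariance, verify the SNR constraint directly, and prove $\bm W^*-\hat{\bm W}\succeq \bm 0$ via Cauchy--Schwarz to deduce $\hat{\bm R_0}\succeq \bm 0$. Your square-root substitution and explicit handling of the degenerate case $\bm h^{\mathrm H}\bm W^*\bm h=0$ are minor presentational additions, but the logical structure is identical to the paper's Appendix~A.
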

\begin{proof}
 See Appendix A.
\end{proof}

\textit {Remark 2:} 
It is worth noting that for the optimal solution to $\text{(P2.1)}$, we have $\mathrm{rank}(\hat{\bm R}_0) \ge 1$ ($\hat{\bm R}_0 \neq \bm 0$) in general, especially when $\bm G$ is randomly generated (e.g., following Rayleigh fading), as will be shown in numerical results in Section IV. This shows the necessity of using dedicated sensing beams for enhancing the ISAC performance with IRS. This is different from the case without IRS in \cite{hua}, where the dedicated sensing beams may not be needed.

\textit {Remark 3:} 
It is also worth discussing a special case when the BS is deployed with a ULA and the channel matrix from the BS to the IRS is LoS, i.e., 
\begin{equation}
\bm G =\bm a(\theta_\text{IRS}) \bm b^\mathrm{H}(\theta_\text{BS}),
\end{equation}
where $\theta_\text{IRS}$ and $\theta_\text{BS}$ denote the angle of arrive (AoA) and the AoD of the BS-IRS link at the IRS and the BS, respectively, $\bm a(\theta_\text{IRS})$ denotes the steering vector at the IRS in \eqref{equ:steering}, and $\bm b(\theta_\text{BS})$ denotes the steering vector at the BS, which is
\begin{equation} 
  \bm b(\theta_\text{BS}) = [1,e^{j  \frac{2\pi d_{\text{BS}}}{\lambda} \sin \theta_\text{BS}},...,e^{j  \frac{2\pi (M-1) d_{\text{BS}}}{\lambda}  \sin\theta_\text{BS}}]^{\mathrm {T}},
\end{equation} 
with $d_{\text{BS}}$ denotes the spacing between consecutive antennas at the BS.
In this case, the beampattern gain from the IRS towards angle $\theta$ is rewritten as 
\begin{equation}
\mathcal{P}(\theta)=|\bm{a}^{\mathrm {H}}(\theta_\text{IRS}) \bm{\Phi}^{\mathrm {H}} \bm{a}(\theta)|^2\bm b^{\mathrm {H}}(\theta_\text{BS})(\bm{W} + \bm R_0 )\bm b(\theta_\text{BS}).
\end{equation}
In this case, it can be shown similarly as in \cite{hua} that $\bm R_0^* = \bm 0$ is general optimal for problem $\text{(P2.1)}$, which means that the dedicated sensing beams are not necessary in this special case. 

\subsection{Reflective Beamforming Optimization at IRS}
Next, we optimize the reflective beamformer $\bm \Phi$ in problem $\text{(P1)}$ under any given active transmit beamformers $\bm w$ and $\bm R_0$. 
Let $\bm{v} = [e^{j\phi_1},...,e^{j\phi_N}]^{\mathrm {H}}$ denote the reflective phase shift vector at the IRS. Then the sensing beampattern gain from the IRS towards angle $\theta$ is rewritten as
\begin{equation}\label{eq:R_1} 
\mathcal{P}(\theta)=\bm{v}^{\mathrm {H}}\bm{R}_{1}(\theta)\bm{v},
\end{equation}
 where $\bm{R}_{1}(\theta)=\mathrm {diag}(\bm{a}^{\mathrm {H}}(\theta))\bm{G}(\bm R_0 + \bm{W})\bm{G}^{\mathrm {H}} \mathrm {diag}(\bm{a}(\theta))$. Furthermore, define 
\begin{equation} \label{eq:R_2}     
\bm{R}_{2}(\theta)=
\left[                 
  \begin{array}{cc}   
    \bm{R}_{1}(\theta)& \bm{0}_{N\times 1}\\ 
    \bm{0}_{1\times N}& 0\\ 
  \end{array}
\right], 
{\bar{\bm{v}}}=
\left[                 
  \begin{array}{c}   
    \bm{v}\\ 
    1\\ 
  \end{array}
\right].
\end{equation}
By substituting \eqref{eq:R_2} into \eqref{eq:R_1}, we have $\mathcal{P}(\theta)={\bar{\bm{v}}}^{\mathrm {H}}\bm{R}_{2}(\theta)\bar{\bm{v}}$. Furthermore, by letting $\bm{H}=\mathrm {diag}(\bm{h}_{\text{r}}^{\mathrm {H}})\bm G\in \mathbb{C}^{N\times M}$, the received signal power in \eqref{eq:SNR} is rewritten as $|(\bm{h}_{\text{r}}^{\mathrm {H}} \bm{\Phi} \bm{G} + \bm{h}_{\text{d}}^{\mathrm {H}})\bm{w}|^2=|(\bm{v}^{\mathrm {H}}\bm{H}+\bm{h}_{\text{d}}^{\mathrm {H}})\bm{w}|^2$. The SNR constraint in \eqref{eq:SNR_cons} is formulated as
\begin{equation}
    (\bm{v}^{\mathrm {H}}\bm{H}+\bm{h}_{\text{d}}^{\mathrm {H}})\bm{W}(\bm{H}^{\mathrm {H}}\bm{v}+\bm{h}_{\text{d}}) \geq \Gamma \sigma^2,
\end{equation}
which is equivalent to 
\begin{equation}
    {\bar{\bm{v}}}^{\mathrm {H}}\bm{R}_{3}{\bar{\bm{v}}} +\bm{h}_{\text{d}}^{\mathrm {H}}\bm{W}\bm{h}_{\text{d}}\geq \Gamma \sigma^2,
\end{equation}
with \begin{equation}   
\bm{R}_{3}=
\left[                 
  \begin{array}{cc}   
    \bm{H}\bm{W}\bm{H}^{\mathrm {H}} & \bm{H}\bm{W}\bm{h}_{\text{d}}\\ 
    \bm{h}_{\text{d}}^{\mathrm {H}}\bm{W}\bm{H}^{\mathrm {H}} & 0\\ 
  \end{array}
\right].
\end{equation}

As a result, the optimization of $\bm \Phi$ in problem $\text{(P1)}$ becomes  the optimization of $\bar{\bm v}$ in the following problem:
\begin{subequations}
  \begin{align}\notag
    \text{(P3)}:   \max_{\bar{\bm v}}&\ \ \min_{l \in \mathcal L} \ {\bar{\bm{v}}}^{\mathrm {H}}\bm{R}_{2}(\theta_l)\bar{\bm{v}}\\
    \text { s.t. }&\quad {\bar{\bm{v}}}^{\mathrm {H}}\bm{R}_{3}{\bar{\bm{v}}} +\bm{h}_{\text{d}}^{\mathrm {H}}\bm{W}\bm{h}_{\text{d}}\geq \Gamma \sigma^2, \quad\\
    & \quad |\bar{\bm v}_n|=1, \forall n\in \{1,...,N+1\}.
  \end{align}
\end{subequations}
Next, we define $\bar {\bm V}=\bar{\bm v}{\bar{\bm v}}^{\mathrm {H}}$ with $\bar{\bm V} \succeq \bm{0}$ and $\mathrm {rank}(\bar{\bm V})=1$. Note that $\bar{\bm v}^{\mathrm {H}} \bm{R}_{2}(\theta) \bar{\bm v}=\mathrm {tr}( \bm{R}_{2}(\theta) \bar{\bm V})$ and $\bar{\bm v}^{\mathrm {H}} \bm{R}_{3} \bar{\bm v}=\mathrm {tr}( \bm{R}_{3} \bar{\bm V})$.
By substituting $\bar{\bm V}=\bar{\bm v}{\bar{\bm v}}^{\mathrm {H}}$, the reflective beamforming optimization in problem $\text{(P3)}$ is reformulated as
\begin{subequations}
  \begin{align}\notag
    \text{(P3.1)}:   \max_{\bar{\bm V}}&\ \ \min_{l \in \mathcal L} \ \mathrm {tr}(\bm{R}_{2}(\theta_l){\bar {\bm V}})\\ \label{eq:st3_1}
    \text { s.t. }&\quad \mathrm {tr}(\bm{R}_{3}{\bar{\bm V}})+\bm{h}_{\text{d}}^{\mathrm {H}}\bm{W}\bm{h}_{\text{d}} \geq \Gamma \sigma^2, \quad\\ \label{eq:st3_2}
    &\quad  {\bar{\bm V}_{n,n}}=1, \forall n\in \{1,...,N+1\},\\ \label{eq:st3_3}
    & \quad \bar{\bm V} \succeq \bm{0},\\\label{eq:st3_4}
    & \quad \mathrm {rank}(\bar{\bm V})=1.
  \end{align}
\end{subequations}

However, problem $\text{(P3.1)}$ is non-convex due to the rank-one constraint on $\bar{\bm V}$ in \eqref{eq:st3_4}. To resolve this issue, we relax the rank-one constraint and accordingly obtain the SDR version of problem $\text{(P3.1)}$ as 
\begin{equation}
  \begin{split}\notag
    \text{(SDR3.1)}:   \max_{\bar{\bm V}}&\ \ \min_{l \in \mathcal L} \ \mathrm {tr}(\bm{R}_{2}(\theta_l){\bar {\bm V}})\\
    \text { s.t. }&\quad \eqref{eq:st3_1}\text{,} \ \eqref{eq:st3_2}\text{,} \ \text{and} \ \eqref{eq:st3_3}.
  \end{split}
\end{equation}
Problem $\text{(SDR3.1)}$ is an SDP that can be solved optimally by convex solvers such as CVX\cite{cvx}. Let $\bar{\bm V}^*$ denote the obtained optimal solution to problem $\text{(SDR3.1)}$, which is generally of high rank. Then Gaussian randomization is used to construct an approximate rank-one solution. Specifically, we first generate a number of randomizations $\bm r \sim \mathcal{CN}(\bm{0},\bar{\bm V}^*)$, and accordingly construct the candidate feasible solution to problem $\text{(P3)}$ as $\bm{v}=e^{j\mathrm {arg}([\frac{\bm r}{\bm r_{N+1}}]_{(1:N)})}$. 
By independently generating Gaussian random vector $\bm{r}$ multiple times, the objective value is approximated as the maximum one among all these random realizations. Notice that the Gaussian randomization should be implemented many times to ensure that the objective value increase at each iteration. 

By combining the obtained solutions to problems $\text{(P2.1)}$ and $\text{(P3.1)}$, the alternating optimization based algorithm for solving problem $\text{(P1)}$ is complete. Notice that in each iteration of alternating optimization, problem $\text{(P2.1)}$ is optimally solved, and thus it will result in a non-decreasing objective value. Also notice that with sufficient number of randomizations, the objective value after solving problem $\text{(P3.1)}$ will be monotonically non-decreasing as well. As a result, the convergence of the proposed alternating optimization based algorithm for solving problem $\text{(P1)}$ is ensured. 

\section{Numerical Results}
This section provides numerical results to evaluate the performance of our proposed IRS-assisted ISAC design. We consider Rician fading for the BS-IRS and IRS-CU links with the Rician factor being $0.5$, and Rayleigh fading for the BS-CU link. The distance-dependent path loss is modeled as $K_0(\frac{d}{d_0})^{-\alpha}$, where $K_0=-30~ \text{dB}$ is the path loss at the reference distance $d_0=1~ \text{m}$ and the path-loss exponent $\alpha$ is set as $2.5$, $2.5$, and $3.5$ for the BS-IRS, IRS-CU, and BS-CU links, respectively. Due to the potential obstacles, we consider additional shadow fading for the BS-CU link, with a standard deviation of $10~ \text{dB}$. The BS, CU, and IRS are located at coordinate $(0,0)$, $(50~\text{m},0)$, and $(18~\text{m},2~\text{m})$, respectively. 
The desired sensing angles are sampled over $[-61^{\circ}, -59^{\circ}]$, $[-31^{\circ},-29^{\circ}]$, $[-1^{\circ},1^{\circ}]$, $[29^{\circ},31^{\circ}]$, and $[59^{\circ},61^{\circ}]$, where the sampling interval is $0.25^{\circ}$. We  set the number of antennas at the BS and the number of reflecting elements at the IRS as $M = 8$ and $N = 64$, respectively. We also set  $P_0 = 20~ \text{dBm}$ and $\sigma^2 = -80~ \text{dBm}$.

Fig.~\ref{convergence} shows the convergence behavior of our proposed alternating optimization based algorithm for solving problem $\text{(P1)}$, where $\Gamma=10~ \text{dB}$. It is shown that the proposed alternating optimization based
algorithm converges within around $10$ iterations, thus validating its effectiveness.

Next, we compare the performance of our proposed IRS-assisted ISAC design, versus the following benchmark schemes.

\subsubsection{Information beamforming only} The BS uses information beam $\bm w$ for both communication and sensing. The joint beamforming design in this case corresponds to problem $\text{(P1)}$ with $\bm R_0= \bm 0$. In this scheme, the obtained solution of problem $\text{(SDR2.1)}$ is usually with high rank and Gaussian randomization is used to reconstruct an approximate rank-one solution.

\subsubsection{Separate beamforming design}
This scheme optimizes the transmit beamforming at the BS and reflection beamforming at the IRS, respectively. First, the reflection beamformer $\bm \Phi$ is optimized to maximize the IRS’s minimum channel norms towards the desired sensing angles, i.e., 
\begin{equation}\label{eq:P4}
\begin{split}\notag
  \text{(P4)}: &\max_{\bm{\Phi}}\ \ \min_{l \in \mathcal L} \ \mathbb{E}(\|\bm{a}^{\mathrm {H}}(\theta_l)\bm{\Phi} \bm{G} \|^2)\\
 &\text { s.t. } \quad  \bm{\Phi} = \mathrm {diag}(e^{j\phi_1},...,e^{j\phi_N}).
\end{split}
\end{equation}
Then, with  the reflective beamformer $\bm \Phi$ obtained from problem $\text{(P4)}$, the joint transmit information and sensing  beamforming is designed by solving problem $\text{(P2)}$.

\begin{figure}[t]
    \centering
    \includegraphics[width=0.35\textwidth]{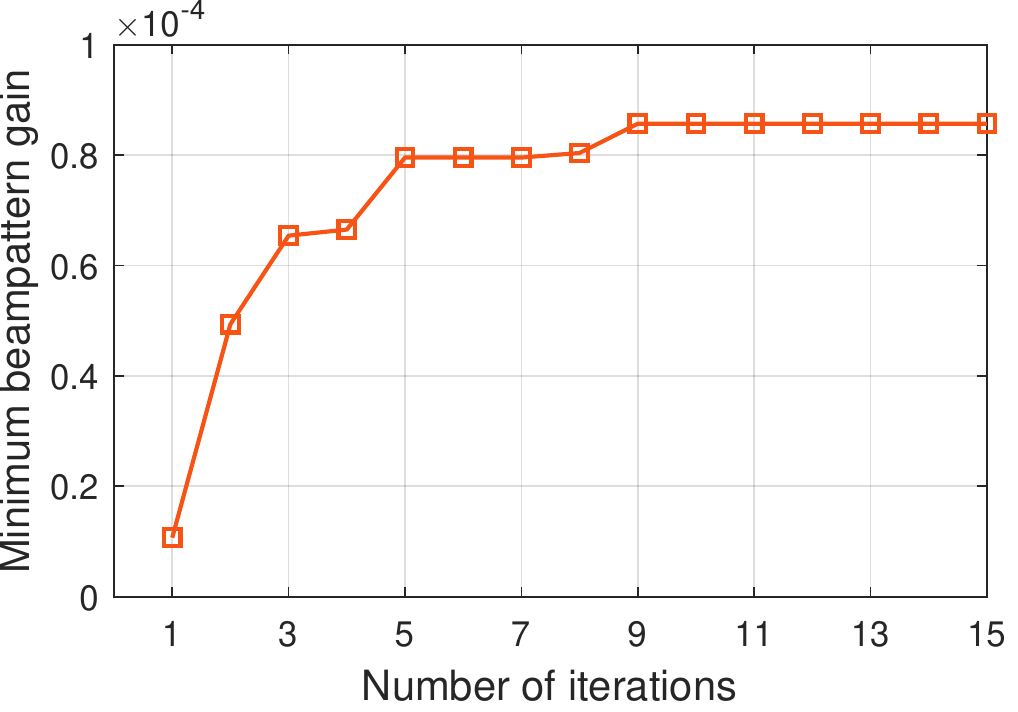}
    \caption{Convergence behavior of the proposed alternating optimization based algorithm for solving problem $\text{(P1)}$, where $\Gamma=10$ dB.}
    \label{convergence}
\end{figure}
\begin{figure}[t]
    \centering
    \includegraphics[width=0.42\textwidth]{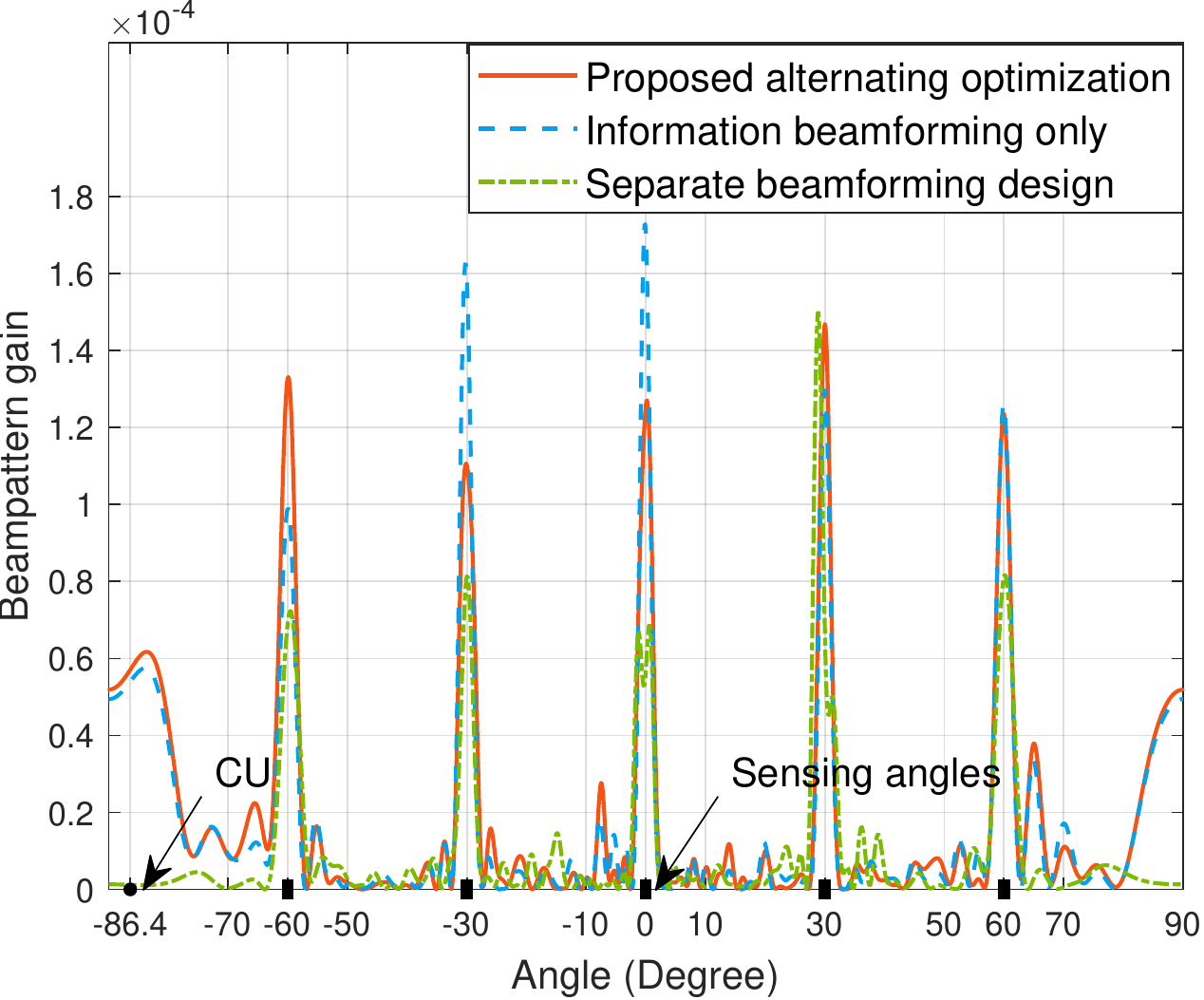}
    \caption{The beampattern gains achieved by different schemes, where $\Gamma=10$ dB.}
    \label{beampattern}
\end{figure}
\begin{figure}[t]
    \centering
    \includegraphics[width=0.35\textwidth]{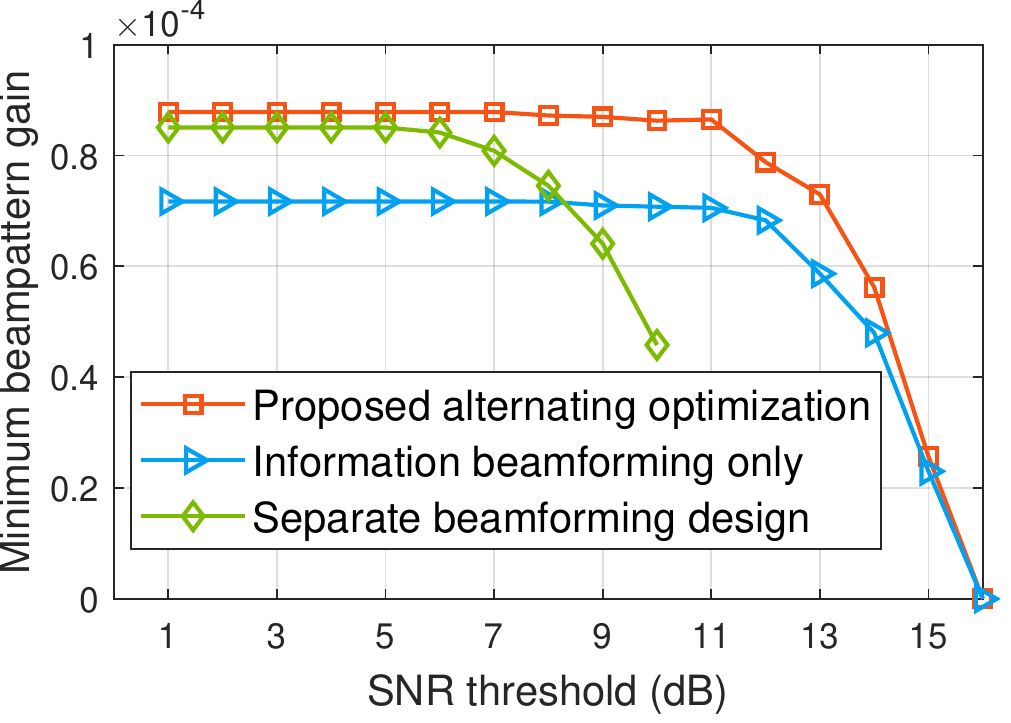}
    \caption{The minimum beampattern gains at the desired sensing angles versus the SNR threshold $\Gamma$ at the CU.}
    \label{gain_SNR_dB}
\end{figure}

Fig.~\ref{beampattern} shows the achievable beampattern gains obtained by different schemes, where $\Gamma=10~ \text{dB}$. 
It is observed that the proposed designs generate multiple signal beams from the IRS towards the desired sensing angles. The proposed alternating optimization based design is observed to achieve higher beampattern gains at the worst angles, than the other two benchmark schemes.

Fig.~\ref{gain_SNR_dB} shows the minimum beampattern gains at the desired sensing angles versus the received SNR threshold $\Gamma$ at the CU. It is observed that the proposed alternating optimization based algorithm achieves significant performance gains over the two benchmark schemes. When $\Gamma$ is small (e.g., $\Gamma < 6 ~ \text{dB}$), the separate beamforming design is observed to perform close to the proposed alternating optimization based design and outperform the information beamforming only scheme. This is due to the fact that in this case the SNR at the CU can be easily satisfied, and thus the sensing oriented reflective beamforming (i.e, problem $\text{(P4)}$) in the separate beamforming design becomes desirable. Furthermore, it is observed that the employment of sensing signal leads to significant sensing performance enhancement, as compared to the counterpart with information beamforming only. This is consistent with \textit{Remark 2} due to the consideration of random channel $\bm G$. By contrast, when $\Gamma$ is high (e.g., $\Gamma >12~ \text{dB}$), it is observed that the information only design performs close to the alternating optimization based design, and the separate beamforming design becomes  infeasible.  This is because that in this case, most energy should be allocated for information signals to meet the SNR requirement at the CU, thus making the information beamforming only design favorable.

\section{Conclusion}
This paper studied the joint transmit and reflective beamforming design in an IRS-assisted ISAC system with a
CU and multiple potential sensing targets at the NLoS areas of the BS. The IRS was deployed to not only assist the communication, but also create virtual LoS links for sensing targets in those conventionally NLoS covered areas. Our objective was to maximize the IRS’s minimum beampattern gain towards the desired sensing angles, while ensuring the SNR requirement at the CU, by jointly optimizing the transmit information and sensing beamforming at the BS and the reflective beamforming at the IRS. To solve this non-convex problem, we proposed an efficient algorithm based on the alternating optimization and SDR. Numerical results showed that the proposed algorithm achieves improved beampattern gains towards desired angles to enable IRS-assisted sensing, while ensuring the communication requirement. It is also shown that our proposed design with dedicated sensing beams significantly outperforms the benchmark scheme without using sensing beams (by only reusing information beams for sensing), which reveals the importance of dedicated sensing beams in IRS-assisted ISAC systems. 

\appendices
\section{Proof of Proposition 1}
It follows from \eqref{eq:R_new} that $ \hat{\bm W} +\hat{\bm R_0} =  \bm W^*+\bm R_0^*$, and as a result, $\hat{\bm W}, \hat{\bm R_0}$ and $\bm W^*, \bm R_0^*$ achieve the same objective values and both satisfy the constraint in \eqref{eq:st_2}. Next, it can be verified that
\begin{equation}
\mathrm {tr}(\bm h \bm h^{\mathrm {H}} \hat{\bm W})
=\mathrm {tr}(\bm h \bm h^{\mathrm {H}} \bm W^*).
\end{equation}
Therefore, $\hat{\bm W}$ also satisfies the SNR constraint in \eqref{eq:st_1}.
Furthermore, for any $\bm y \in \mathbb{C}^{M \times 1}$, it holds that
\begin{equation}
\bm y^{\mathrm {H}} (\bm W^* - \hat{\bm W})\bm y = \bm y^{\mathrm {H}} \bm W^* \bm y - |\bm y^{\mathrm {H}} \bm W^* \bm h|^2(\bm h^{\mathrm {H}} \bm W^* \bm h)^{-1}.
\end{equation}
According to the Cauchy-Schwarz inequality, we have
\begin{equation}
(\bm y^{\mathrm {H}} \bm W^* \bm y) (\bm h^{\mathrm {H}} \bm W^* \bm h)   \geq |\bm y^{\mathrm {H}} \bm W^* \bm h|^2,
\end{equation}
and it follows that $\bm y^{\mathrm {H}} (\bm W^* - \hat{\bm W})\bm y\geq 0$. Accordingly, we have  $\bm W^* - \hat{\bm W} \succeq 0$. In addition, as  the summation of a set of positive semidefinite matrices is also
positive semidefinite, it follows that $\hat{\bm R_0} \succeq 0$. Hence, $\hat{\bm W}, \hat{\bm R_0}$ also satisfy the constraint in \eqref{eq:st_4}. Notice that $\mathrm {rank} (\hat{\bm W}) \le 1$ with $\hat{\bm W} =\hat{\bm w}\hat{\bm w}^{\mathrm {H}}$. Therefore, $\hat{\bm W}$ and $\hat{\bm R_0}$ are optimal for problem $\text{(P2.1)}$. Proposition $1$ is finally proved.

\ifCLASSOPTIONcaptionsoff
  \newpage
\fi

\bibliographystyle{IEEETran}
\bibliography{IEEEabrv,myref}

\end{document}